\documentclass[letterpaper, 10 pt, dvipsnames, conference]{ieeeconf}
\usepackage{graphicx,url}
\graphicspath{ {./images/} }
\usepackage{amssymb,amsmath,mathtools}
\usepackage{tikz,pgfplots}
\usepackage{pgfplots}
\usepackage{pgfplotstable}
\usetikzlibrary{pgfplots.groupplots}
\usepgfplotslibrary{colorbrewer}
\pgfplotsset{compat = 1.15, cycle list/Set1-8}
\usetikzlibrary{pgfplots.statistics, pgfplots.colorbrewer}
\usetikzlibrary{pgfplots.groupplots}
\usetikzlibrary{shapes.geometric,backgrounds,patterns, trees}
\usetikzlibrary{3d,decorations.text,shapes.arrows,positioning,fit,backgrounds}
\usetikzlibrary{positioning, decorations.pathmorphing, shapes}
\usetikzlibrary{decorations.pathreplacing}
\usetikzlibrary{shapes.geometric,backgrounds,patterns, trees}
\usetikzlibrary{spy}
\usetikzlibrary{arrows.meta,
                bending,
                intersections,
                quotes,
                shapes.geometric}
              \usetikzlibrary{automata, positioning}
              \usepgfplotslibrary{fillbetween}
\usetikzlibrary{shapes,arrows}
\usetikzlibrary{arrows.meta}
\usetikzlibrary{positioning}
\tikzset{set/.style={draw,circle,inner sep=0pt,align=center}}
\usetikzlibrary{automata, positioning}
  \usetikzlibrary{shapes,shadows}
  \tikzstyle{abstractbox} = [draw=black, fill=white, rectangle,
  inner sep=10pt, style=rounded corners, drop shadow={fill=black,
  opacity=1}]
\tikzstyle{abstracttitle} =[fill=white]
\usetikzlibrary{calc,positioning,shapes.geometric}
\usetikzlibrary{arrows.meta,arrows}
\usetikzlibrary{matrix}

\colorlet{myRed}{red!20}
\tikzset{
  rows/.style 2 args={/utils/temp/.style={row ##1/.append style={nodes={#2}}},
    /utils/temp/.list={#1}},
  columns/.style 2 args={/utils/temp/.style={column ##1/.append style={nodes={#2}}},
    /utils/temp/.list={#1}}}
\usetikzlibrary{backgrounds,calc,shadings,shapes.arrows,shapes.symbols,shadows}
\definecolor{switch}{HTML}{006996}
\pgfkeys{/pgf/.cd,
  parallelepiped offset x/.initial=2mm,
  parallelepiped offset y/.initial=2mm
}
\pgfdeclareshape{parallelepiped}
{
  \inheritsavedanchors[from=rectangle] 
  \inheritanchorborder[from=rectangle]
  \inheritanchor[from=rectangle]{north}
  \inheritanchor[from=rectangle]{north west}
  \inheritanchor[from=rectangle]{north east}
  \inheritanchor[from=rectangle]{center}
  \inheritanchor[from=rectangle]{west}
  \inheritanchor[from=rectangle]{east}
  \inheritanchor[from=rectangle]{mid}
  \inheritanchor[from=rectangle]{mid west}
  \inheritanchor[from=rectangle]{mid east}
  \inheritanchor[from=rectangle]{base}
  \inheritanchor[from=rectangle]{base west}
  \inheritanchor[from=rectangle]{base east}
  \inheritanchor[from=rectangle]{south}
  \inheritanchor[from=rectangle]{south west}
  \inheritanchor[from=rectangle]{south east}
  \backgroundpath{
    \southwest \pgf@xa=\pgf@x \pgf@ya=\pgf@y
    \northeast \pgf@xb=\pgf@x \pgf@yb=\pgf@y
    \pgfmathsetlength\pgfutil@tempdima{\pgfkeysvalueof{/pgf/parallelepiped
      offset x}}
    \pgfmathsetlength\pgfutil@tempdimb{\pgfkeysvalueof{/pgf/parallelepiped
      offset y}}
    \def\ppd@offset{\pgfpoint{\pgfutil@tempdima}{\pgfutil@tempdimb}}
    \pgfpathmoveto{\pgfqpoint{\pgf@xa}{\pgf@ya}}
    \pgfpathlineto{\pgfqpoint{\pgf@xb}{\pgf@ya}}
    \pgfpathlineto{\pgfqpoint{\pgf@xb}{\pgf@yb}}
    \pgfpathlineto{\pgfqpoint{\pgf@xa}{\pgf@yb}}
    \pgfpathclose
    \pgfpathmoveto{\pgfqpoint{\pgf@xb}{\pgf@ya}}
    \pgfpathlineto{\pgfpointadd{\pgfpoint{\pgf@xb}{\pgf@ya}}{\ppd@offset}}
    \pgfpathlineto{\pgfpointadd{\pgfpoint{\pgf@xb}{\pgf@yb}}{\ppd@offset}}
    \pgfpathlineto{\pgfpointadd{\pgfpoint{\pgf@xa}{\pgf@yb}}{\ppd@offset}}
    \pgfpathlineto{\pgfqpoint{\pgf@xa}{\pgf@yb}}
    \pgfpathmoveto{\pgfqpoint{\pgf@xb}{\pgf@yb}}
    \pgfpathlineto{\pgfpointadd{\pgfpoint{\pgf@xb}{\pgf@yb}}{\ppd@offset}}
  }
}

\makeatletter
\tikzset{anchor/.append code=\let\tikz@auto@anchor\relax,
  add font/.code=%
    \expandafter\def\expandafter\tikz@textfont\expandafter{\tikz@textfont#1},
  left delimiter/.style 2 args={append after command={\tikz@delimiter{south east}
    {south west}{every delimiter,every left delimiter,#2}{south}{north}{#1}{.}{\pgf@y}}}}
\tikzstyle{sms} = [rectangle callout, draw,very thick, rounded corners, minimum height=20pt]
\makeatletter
\tikzset{anchor/.append code=\let\tikz@auto@anchor\relax,
  add font/.code=%
    \expandafter\def\expandafter\tikz@textfont\expandafter{\tikz@textfont#1},
  left delimiter/.style 2 args={append after command={\tikz@delimiter{south east}
    {south west}{every delimiter,every left delimiter,#2}{south}{north}{#1}{.}{\pgf@y}}}}
\tikzstyle{sms} = [rectangle callout, draw,very thick, rounded corners, minimum height=20pt]
\usetikzlibrary{positioning,calc}
\tikzstyle{block} = [rectangle, draw,
text width=10.5em, text centered, rounded corners, minimum height=4em]
\tikzstyle{line} = [draw, -latex]
\tikzset{l3 switch/.style={
    parallelepiped,fill=switch, draw=white,
    minimum width=0.75cm,
    minimum height=0.75cm,
    parallelepiped offset x=1.75mm,
    parallelepiped offset y=1.25mm,
    path picture={
      \node[fill=white,
        circle,
        minimum size=6pt,
        inner sep=0pt,
        append after command={
          \pgfextra{
            \foreach \angle in {0,45,...,360}
            \draw[-latex,fill=white] (\tikzlastnode.\angle)--++(\angle:2.25mm);
          }
        }
      ]
       at ([xshift=-0.75mm,yshift=-0.5mm]path picture bounding box.center){};
    }
  },
  ports/.style={
    line width=0.3pt,
    top color=gray!20,
    bottom color=gray!80
  },
  rack switch/.style={
    parallelepiped,fill=white, draw,
    minimum width=1.25cm,
    minimum height=0.25cm,
    parallelepiped offset x=2mm,
    parallelepiped offset y=1.25mm,
    xscale=-1,
    path picture={
      \draw[top color=gray!5,bottom color=gray!40]
      (path picture bounding box.south west) rectangle
      (path picture bounding box.north east);
      \coordinate (A-west) at ([xshift=-0.2cm]path picture bounding box.west);
      \coordinate (A-center) at ($(path picture bounding box.center)!0!(path
        picture bounding box.south)$);
      \foreach \x in {0.275,0.525,0.775}{
        \draw[ports]([yshift=-0.05cm]$(A-west)!\x!(A-center)$)
          rectangle +(0.1,0.05);
        \draw[ports]([yshift=-0.125cm]$(A-west)!\x!(A-center)$)
          rectangle +(0.1,0.05);
       }
      \coordinate (A-east) at (path picture bounding box.east);
      \foreach \x in {0.085,0.21,0.335,0.455,0.635,0.755,0.875,1}{
        \draw[ports]([yshift=-0.1125cm]$(A-east)!\x!(A-center)$)
          rectangle +(0.05,0.1);
      }
    }
  },
  server/.style={
    parallelepiped,
    fill=white, draw,
    minimum width=0.35cm,
    minimum height=0.75cm,
    parallelepiped offset x=3mm,
    parallelepiped offset y=2mm,
    xscale=-1,
    path picture={
      \draw[top color=gray!5,bottom color=gray!40]
      (path picture bounding box.south west) rectangle
      (path picture bounding box.north east);
      \coordinate (A-center) at ($(path picture bounding box.center)!0!(path
        picture bounding box.south)$);
      \coordinate (A-west) at ([xshift=-0.575cm]path picture bounding box.west);
      \draw[ports]([yshift=0.1cm]$(A-west)!0!(A-center)$)
        rectangle +(0.2,0.065);
      \draw[ports]([yshift=0.01cm]$(A-west)!0.085!(A-center)$)
        rectangle +(0.15,0.05);
      \fill[black]([yshift=-0.35cm]$(A-west)!-0.1!(A-center)$)
        rectangle +(0.235,0.0175);
      \fill[black]([yshift=-0.385cm]$(A-west)!-0.1!(A-center)$)
        rectangle +(0.235,0.0175);
      \fill[black]([yshift=-0.42cm]$(A-west)!-0.1!(A-center)$)
        rectangle +(0.235,0.0175);
    }
  },
}

\usetikzlibrary{calc, shadings, shadows, shapes.arrows}
\tikzset{cross/.style={cross out, draw=black, minimum size=2*(#1-\pgflinewidth), inner sep=0pt, outer sep=0pt},
cross/.default={1pt}}
\tikzset{%
  interface/.style={draw, rectangle, rounded corners, font=\LARGE\sffamily},
  ethernet/.style={interface, fill=yellow!50},
  serial/.style={interface, fill=green!70},
  speed/.style={sloped, anchor=south, font=\large\sffamily},
  route/.style={draw, shape=single arrow, single arrow head extend=4mm,
    minimum height=1.7cm, minimum width=3mm, white, fill=switch!20,
    drop shadow={opacity=.8, fill=switch}, font=\tiny}
}

\usepackage{float}
\definecolor{bluetwo}{RGB}{189, 213, 234}
\definecolor{bluethree}{RGB}{165, 193, 224}
\definecolor{bluefour}{RGB}{141, 169, 200}

\usepackage{amsthm}
\newtheoremstyle{ieeebold}
  {5pt}                   
  {5pt}                   
  {\normalfont}           
  {}                      
  {\bfseries}             
  {.}                     
  {0.5em}                 
  {}                    
\newtheorem{observation}{Observation}
\theoremstyle{ieeebold}
\newtheorem{proposition}{Proposition}

\newtheorem{lemma}{Lemma}

\newtheorem{remark}{Remark}

\usepackage{bbm}
\usepackage{bm}

\newcommand\numeq[1]%
{\stackrel{\scriptscriptstyle(\mkern-1.5mu#1\mkern-1.5mu)}{=}}
\newcommand\numeqq[1]%
{\stackrel{\scriptscriptstyle(\mkern-1.5mu#1\mkern-1.5mu)}{\triangleq}}
\newcommand\numleq[1]%
{\stackrel{\scriptscriptstyle(\mkern-1.5mu#1\mkern-1.5mu)}{\leq}}
\newcommand\numgeq[1]%
{\stackrel{\scriptscriptstyle(\mkern-1.5mu#1\mkern-1.5mu)}{\geq}}
\newcommand\numimp[1]%
{\stackrel{\scriptscriptstyle(\mkern-1.5mu#1\mkern-1.5mu)}{\implies}}

\usetikzlibrary{backgrounds}
\usetikzlibrary{patterns}
\pgfmathdeclarefunction{gauss}{3}{%
  \pgfmathparse{1/(#3*sqrt(2*pi))*exp(-((#1-#2)^2)/(2*#3^2))}%
}

\usepackage{listofitems} 
\usetikzlibrary{arrows.meta} 
\usepackage[outline]{contour} 
\contourlength{1.4pt}

\colorlet{myred}{red!80!black}
\colorlet{myblue}{blue!80!black}
\colorlet{mygreen}{green!60!black}
\colorlet{myorange}{orange!70!red!60!black}
\colorlet{mydarkred}{red!30!black}
\colorlet{mydarkblue}{blue!40!black}
\colorlet{mydarkgreen}{green!30!black}

\tikzset{
  >=latex, 
  node/.style={thick,circle,draw=myblue,minimum size=22,inner sep=0.5,outer sep=0.6},
  node in/.style={node,black!20!black,draw=mygreen!30!black,fill=black!20},
  node hidden/.style={node,black!20!black,draw=myblue!30!black,fill=black!20},
  node convol/.style={node,black!20!black,draw=myorange!30!black,fill=black!20},
  node out/.style={node,red!20!black,draw=myred!30!black,fill=black!20},
  connect/.style={thick,Blue!100}, 
  connect arrow/.style={-{Latex[length=4,width=3.5]},thick,mydarkblue,shorten <=0.5,shorten >=1},
  node 1/.style={node in}, 
  node 2/.style={node hidden},
  node 3/.style={node out}
}

\DeclareMathOperator*{\argmax}{arg\,max}

\tikzset{
    database/.style={
        path picture={
            \draw (0, 1.5*\database@segmentheight) circle [x radius=\database@radius,y radius=\database@aspectratio*\database@radius];
            \draw (-\database@radius, 0.5*\database@segmentheight) arc [start angle=180,end angle=360,x radius=\database@radius, y radius=\database@aspectratio*\database@radius];
            \draw (-\database@radius,-0.5*\database@segmentheight) arc [start angle=180,end angle=360,x radius=\database@radius, y radius=\database@aspectratio*\database@radius];
            \draw (-\database@radius,1.5*\database@segmentheight) -- ++(0,-3*\database@segmentheight) arc [start angle=180,end angle=360,x radius=\database@radius, y radius=\database@aspectratio*\database@radius] -- ++(0,3*\database@segmentheight);
        },
        minimum width=2*\database@radius + \pgflinewidth,
        minimum height=3*\database@segmentheight + 2*\database@aspectratio*\database@radius + \pgflinewidth,
    },
    database segment height/.store in=\database@segmentheight,
    database radius/.store in=\database@radius,
    database aspect ratio/.store in=\database@aspectratio,
    database segment height=0.1cm,
    database radius=0.25cm,
    database aspect ratio=0.35,
  }

\definecolor{gray2}{HTML}{ededed}
\definecolor{gray3}{HTML}{F5F5F5}
\definecolor{RoyalAzure}{rgb}{0.0, 0.22, 0.66}
\definecolor{lightgray}{gray}{0.9}
\definecolor{lightgray}{gray}{0.9}
\definecolor{lightgreen}{rgb}{0.88, 1, 0.88}
\definecolor{lightred}{rgb}{1, 0.88, 0.88}
\definecolor{lightblue}{rgb}{0.88, 0.94, 1}
\definecolor{lightorange}{rgb}{1, 0.94, 0.88}

\usetikzlibrary{calc, shadings, shadows, shapes.arrows}
\tikzset{cross/.style={cross out, draw=black, minimum size=2*(#1-\pgflinewidth), inner sep=0pt, outer sep=0pt},
cross/.default={1pt}}
\tikzset{%
  interface/.style={draw, rectangle, rounded corners, font=\LARGE\sffamily},
  ethernet/.style={interface, fill=yellow!50},
  serial/.style={interface, fill=green!70},
  speed/.style={sloped, anchor=south, font=\large\sffamily},
  route/.style={draw, shape=single arrow, single arrow head extend=4mm,
    minimum height=1.7cm, minimum width=3mm, white, fill=switch!20,
    drop shadow={opacity=.8, fill=switch}, font=\tiny}
}
\usepackage{dsfont}
\allowdisplaybreaks

\usepackage{mathrsfs}
\usepackage{colortbl}
\usepackage{booktabs}

\IEEEoverridecommandlockouts
\title{\LARGE \bf
Optimal Stopping of Self-Refining Foundation Models
}

\author{Kim Hammar, Tansu Alpcan, and Emil C. Lupu
\thanks{K. Hammar is supported by the Swedish Research Council under contract 2024-06436.}
\thanks{
T. Alpcan is with the University of Melbourne, Australia. \texttt{tansu.alpcan@unimelb.edu.au}.
}
\thanks{
K. Hammar and E.C. Lupu are with Imperial College London, United Kingdom. \texttt{\{k.hammar,e.c.lupu\}@imperial.ac.uk}.
}
}

\begin{document}

\maketitle
\thispagestyle{empty}
\pagestyle{empty}

\begin{abstract}
Foundation models can improve their outputs through a self-refinement process driven by external feedback. In this process, the model is embedded in an iterative loop where it generates outputs, receives feedback from verifiers, and refines its responses through in-context learning. Following a novel approach, we formalize this process as an optimal stopping problem where the number of refinement iterations is decided based on expected improvement relative to cost. We derive optimal stopping policies and show that they can be efficiently computed through stochastic approximation. To evaluate our approach experimentally, we apply it to a coding benchmark for foundation models. The empirical results show that our stopping policies are  significantly more cost-efficient than stopping policies proposed in prior work.
\end{abstract}

\section{Introduction}
Foundation models are becoming an important component of decision-making systems across many domains, including software development \cite{he2025llm}, scientific research \cite{Lu2026}, content generation \cite{maleki2024procedural}, and systems engineering \cite{hammar2025incidentresponseplanningusing}. They consist of large neural networks that are trained on vast datasets (e.g., web-scale text corpora), which enables them to generalize across tasks and modalities. Prominent examples include large language models (e.g., Gemini \cite{geminiteam2024geminifamilyhighlycapable}), time-series models (e.g., Chronos \cite{pmlr-v258-arango25a}), and visual models (e.g., Flamingo \cite{NEURIPS2022_960a172b}).

These types of models generate outputs autoregressively by sampling from a conditional distribution $p_{\theta}(z \mid v)$, where $\theta$ denotes the model parameters, $v$ is an input (e.g., a task description), and $z$ is the generated output (e.g., code or text). Since the model adapts to the input $v$, the same model can be applied to different tasks without updating the parameters $\theta$, a property known as \textit{in-context learning} \cite{NEURIPS2020_1457c0d6}. This flexibility also enables the model to refine its outputs based on feedback. In particular, given an initial output $z$ and feedback $x$ (e.g., from automated verification procedures), the model can generate a revised output by sampling from $p_{\theta}(z' \mid v,z,x)$, as initially shown by Madaan et al. \cite{madaan2023selfrefine}.

The process of generating and revising outputs based on feedback typically continues for a fixed number of iterations or until a predetermined stopping criterion is met. While prior work has demonstrated that this approach improves outputs across various application domains, such as coding (see e.g., \cite{NEURIPS2024_99c66755}) and logical reasoning (see e.g., \cite{weng-etal-2023-large,wang2024a}), the literature is focused on empirical evaluation and lacks a formal analysis. In particular, there is little understanding of how to optimally decide when to stop refining.

In this paper, we address this research gap by presenting a decision-theoretic model of the self-refinement process. Specifically, we formulate self-refinement as an optimal stopping problem in which the number of refinement iterations is decided based on the expected improvement relative to the cost of invoking the foundation model. For example, if the foundation model is hosted locally, each invocation incurs computational expenditure. Similarly, if the model is accessed via an external provider, it incurs a monetary cost.

\begin{figure}
  \centering
  \scalebox{0.72}{
    \includegraphics{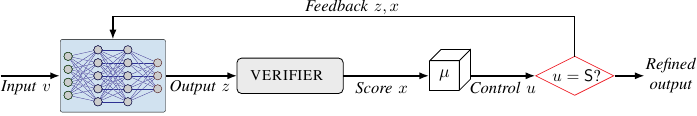}
  }
\caption{Illustration of a self-refining foundation model with stopping. The model is integrated into an iterative verification and refinement loop, in which it is used to generate outputs that are verified against constraints to generate a score. The score is then input to a stopping policy $\mu$ that decides whether to stop and accept the output or to continue revising it.}
  \label{fig:method}
\end{figure}

Leveraging this mathematical formulation, we establish conditions for optimal threshold-based stopping policies and validate them on a coding benchmark across three frontier models. Empirical evaluations show that our stopping policies are cost-efficient and consistently improve performance when compared to other approaches proposed in the research literature, which rely on heuristic stopping policies. 

Our contributions can be summarized as follows:
\begin{itemize}
\item We present a novel formulation of self-refinement in foundation models as an optimal stopping problem.
\item We derive optimal stopping policies and validate them on a coding benchmark. The results show that they are more cost-efficient than stopping policies proposed in prior work. Our implementation is available at \cite{cdc_source_kim}.
\end{itemize}  

\section{Related Work}
Optimal stopping is a classical problem with a well-developed theory; see e.g., Wald \cite{wald}, Shiryaev \cite{shirayev}, and Chow et al. \cite{chow1971great}. Example use cases for this theory include asset selling \cite{bert05}, intrusion detection \cite{tartakovsky_1}, network management \cite{hammar_stadler_tnsm}, machine replacement \cite{krishnamurthy_2016}, hypothesis testing \cite{wald}, gambling \cite{chow1971great}, industrial control \cite{kalle_stopping_industrial}, change detection \cite{10423411,7569076,9086257,8550188,10.1109/CDC42340.2020.9303818}, and flow control \cite{10955193,11592598,Hammar_Alpcan_2026}. To our knowledge, we are the first to apply this theory to analyze the self-refinement process of a foundation model.

Many variants of the stopping problem have been studied. Examples include discrete-time and continuous-time problems, finite-horizon and infinite-horizon problems, problems with fully observed and partially observed state spaces, problems with finite and infinite state spaces, Markovian and non-Markovian problems, and single-stop and multi-stop problems. Consequently, different solution approaches have been developed. The most common are the \textit{martingale} approach (see e.g., Snell \cite{Snell1952TAMS}) and the \textit{Markovian} approach (see e.g., Bather \cite{bather_decision_theory}). In this paper, we investigate a stopping problem with a finite time horizon, discrete-time progression, a continuous state space, and the Markov property.

\section{The Self-Refinement Use Case}
We consider a scenario where a foundation model is used to solve a task specified by an instruction in natural language. As an example, the task could be to solve a programming problem and the model could be a large language model, such as Gemini \cite{geminiteam2024geminifamilyhighlycapable}. To evaluate the solution produced by the model, we associate it with a \textit{score} $x \in [0,1]$, with $x=1$ being the optimal score. For instance, in the programming example, the solution could be a piece of code and the score $x$ could capture its correctness and computational efficiency.

We assume that the model is embedded in an iterative loop where it refines its output based on feedback, as illustrated in Fig.~\ref{fig:method}. Each iteration invokes the model with the current score $x \in [0,1]$ as feedback, which leads to a new output that receives an updated score $x'$ and incurs a cost $c > 0$. This cost can be either computational (if the model is run locally) or monetary (if the model is provided externally). 

After each refinement iteration, a stopping policy $\mu$ uses the current score to decide whether to stop and accept the output or to continue refining, where the maximum number of iterations is $N$. When designing this policy, the goal is to optimally balance the refinement costs against the potential improvement in the score $x$, as formally defined below.
\section{Optimal Stopping Formulation}
We formalize the self-refinement use case described above as a discrete-time dynamical system where the score evolves as a Markov process $(x_{k})_{k=0}^{N}$. At each stage $k \in \{0,1,\hdots,N-1\}$, two controls are available: ($\mathsf{S}$)top and ($\mathsf{C}$)ontinue. Control $u=\mathsf{S}$ in state $x$ yields a payoff $g(x) \geq 0$ that quantifies the quality of the output and terminates the process. Conversely, control $u=\mathsf{C}$ incurs a cost $c > 0$ and transitions the system to the next state according to 
\begin{equation}\label{eq:dynamics}
x_{k+1} = f(x_k, w_k), \quad k = 0, 1, \ldots, N-1,
\end{equation}
where the initial state $x_0$ is given, $(w_k)_{k=0}^{N-1}$ are i.i.d. random variables with distribution $P_w$ that capture the stochasticity of the generation process, and $f$ is the system function.

Let $(\Omega, \mathcal{F}, P)$ be the underlying probability space and let $(\mathcal{F}_k)^{N}_{k=0}$ denote the filtration generated by the quality scores, i.e., each $\mathcal{F}_k$ is the sigma-algebra generated by the random variables $x_0, \hdots, x_k$. Within this space, we define a stopping policy $\mu = (\mu_0, \mu_1, \hdots, \mu_{N})$ as a sequence of measurable functions $\mu_k: [0,1] \rightarrow \{\mathsf{S},\mathsf{C}\}$ that map the state to a control.

Given a policy $\mu$, we define the stopping time as
\begin{equation}\label{eq:tau}
\tau_{\mu} = \min\{k \in \{0, 1, \ldots, N\} \mid  \mu_k(x_k) = \mathsf{S}\},
\end{equation}
with the constraint $\mu_N(x_N) = \mathsf{S}$, which ensures $\tau_{\mu} \leq N$. 

\begin{proposition}\label{prop:stopping_time}
The random variable $\tau_{\mu}$ defined in \eqref{eq:tau} is a stopping time with respect to the filtration $(\mathcal{F}_k)^N_{k=0}$.
\end{proposition}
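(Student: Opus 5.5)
We verify the defining property of a stopping time directly: $\tau_{\mu}$ takes values in $\{0,1,\hdots,N\}$, and for every $k$ the event $\{\tau_{\mu} = k\}$ belongs to $\mathcal{F}_k$. Since the index set is discrete and finite, this is equivalent to $\{\tau_{\mu} \leq k\} \in \mathcal{F}_k$ for all $k$. We will establish the latter form.

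First we check that $\tau_{\mu}$ is well defined with values in $\{0,\hdots,N\}$. The constraint $\mu_N(x_N)=\mathsf{S}$ guarantees that the set in \eqref{eq:tau} is nonempty for every $\omega \in \Omega$. Hence the minimum exists and satisfies $\tau_{\mu} \leq N$.

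Next, for each $j$ we define the set $A_j = \{\omega : \mu_j(x_j(\omega)) = \mathsf{S}\} = x_j^{-1}(\mu_j^{-1}(\{\mathsf{S}\}))$. Because $\mu_j$ is measurable from $[0,1]$ (with its Borel sigma-algebra) into $\{\mathsf{S},\mathsf{C}\}$ (with its power set), the preimage $\mu_j^{-1}(\{\mathsf{S}\})$ is a Borel set. Because $x_j$ is a random variable, its preimage of that Borel set lies in $\sigma(x_j)$. Since $\sigma(x_j) \subseteq \mathcal{F}_j \subseteq \mathcal{F}_k$ for $j \leq k$, the filtration property gives $A_j \in \mathcal{F}_k$ whenever $j \leq k$.

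By the definition of the minimum, $\{\tau_{\mu} \leq k\} = \bigcup_{j=0}^{k} A_j$, which is a finite union of sets in $\mathcal{F}_k$ and therefore lies in $\mathcal{F}_k$. Equivalently, $\{\tau_{\mu}=k\} = A_k \cap \bigcap_{j<k} A_j^c \in \mathcal{F}_k$. This holds for every $k \in \{0,\hdots,N\}$, and for $k = N$ the set is all of $\Omega$.

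The only point needing care is the measurability step: we must equip $\{\mathsf{S},\mathsf{C}\}$ with its discrete sigma-algebra and $[0,1]$ with its Borel sigma-algebra, and use that each $x_j$ is $\mathcal{F}_j$-measurable. The latter holds by the definition of $(\mathcal{F}_k)$ as the filtration generated by the scores. The rest is routine set algebra.
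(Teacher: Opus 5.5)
Your proof is correct and follows essentially the same route as the paper: both reduce the claim to the events $\{\mu_j(x_j)=\mathsf{S}\}$, which lie in $\sigma(x_j)\subseteq\mathcal{F}_j\subseteq\mathcal{F}_k$ for $j\le k$ by measurability of $\mu_j$. The only cosmetic difference is that you argue via the finite union $\{\tau_{\mu}\le k\}=\bigcup_{j\le k}A_j$, which handles $k=N$ automatically since $\{\tau_{\mu}\le N\}=\Omega$, whereas the paper uses the intersection form of $\{\tau_{\mu}=k\}$ and treats $k=N$ separately.
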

\begin{proof}
It suffices to show that the event $\{\tau_{\mu} = k\}$ belongs to the sigma-algebra $\mathcal{F}_k$ for all $k \in \{0,1,\hdots,N\}$. For $k < N$, $\{\tau_{\mu}=k\}$ is the event that the stopping policy continues at all preceding iterations and stops at iteration $k$, i.e.,
\begin{align*}
\{\tau_{\mu}=k\} = \Bigg(\bigcap_{j=0}^{k-1} \{\mu_j(x_j) = \mathsf{C}\}\Bigg) \cap \{\mu_k(x_k) = \mathsf{S}\},
\end{align*}
where the intersection is $\Omega$ when $k=0$.

Since each policy $\mu_j$ is a measurable function of $x_j$ and $x_j$ is $\mathcal{F}_j$-measurable (and hence $\mathcal{F}_k$-measurable for $j \leq k$), each set in the intersection belongs to $\mathcal{F}_k$. As a consequence, we have $\{\tau_{\mu}=k\} \in \mathcal{F}_k$. For $k=N$, the event $\{\tau_{\mu}=N\}=\bigcap_{j=0}^{N-1}\{\mu_j(x_j) = \mathsf{C}\}$ belongs to $\mathcal{F}_{N-1} \subseteq \mathcal{F}_N$. Thus $\tau_{\mu}$ is a stopping time with respect to $(\mathcal{F}_k)^{N}_{k=0}$.
\end{proof}
Given the preceding definition of a stopping policy $\mu$, we define the stopping and continuation sets as
\begin{align*}
\mathscr{S}_k^{\mu} &= \{x \mid x \in [0,1], \mu_k(x) = \mathsf{S}\}, && (\text{Stopping set})\\
\mathscr{C}_k^{\mu} &= \{x \mid x \in [0,1], \mu_k(x) = \mathsf{C}\}. && (\text{Continuation set})
\end{align*}  

When designing the stopping policy $\mu$, the objective is to maximize the expected payoff $E\{g(x_{\tau_{\mu}})\}$ while minimizing refinement costs. In particular, each refinement iteration incurs a cost $c > 0$ (e.g., per-token charges) and the objective is to find an optimal stopping policy $\mu^{\star}$ that satisfies
\begin{equation}\label{eq:objective}
\mu^\star \in \argmax_{\mu}  E\left\{g(x_{\tau_{\mu}}) -c\tau_{\mu} \right\} \text{ subject to \eqref{eq:dynamics}},
\end{equation}
where the expectation is taken over the randomness in the dynamics \eqref{eq:dynamics} and the stopping time $\tau_{\mu}$ is defined in \eqref{eq:tau}. We denote the optimal stopping time, stopping set, and continuation set by $\tau^{\star}, \mathscr{S}_k^{\star}$, and $\mathscr{C}_k^{\star}$, respectively.
\begin{remark}
\textit{We model the cost $c$ as a constant. In principle, $c$ could vary with the stage $k$, e.g., if token consumption changes across iterations. However, in our experimental evaluation (see \S\ref{sec:id}), token consumption per iteration is approximately constant, which justifies this assumption.}
\end{remark}
\subsection*{Characterizing the Optimal Stopping Policy}
To characterize the optimal stopping policy for problem \eqref{eq:objective}, we use a dynamic programming formulation. Following this approach, we define the stage-dependent stopping time
\begin{align*}
\tau_{\mu,k} = \min\{j \in \{k, k+1, \hdots, N\} \mid \mu_j(x_j) = \mathsf{S}\},
\end{align*}
so that $\tau_{\mu} = \tau_{\mu,0}$; cf.~\eqref{eq:tau}. We then define the value function $V^{\mu}_k: [0,1] \rightarrow \Re$ to represent the expected payoff minus refinement costs from state $x$ at stage $k$ under policy $\mu$, i.e.,
\begin{align*}
V^{\mu}_k(x) &= E\{g(x_{\tau_{\mu,k}})-c(\tau_{\mu,k}-k) \mid x_k=x\},
\end{align*}
for all stages $k=0, 1, \ldots, N-1$ and states $x \in [0,1]$, with the terminal condition $V^{\mu}_N(x)=g(x)$ for all $x \in [0,1]$.

The optimal value function $V_k^{\star}$, which is derived by optimizing over all policies $\mu$, satisfies the Bellman equation
\begin{equation}\label{eq:bellman}
V^{\star}_k(x) = \max\Big\{g(x),\; -c + E\big\{V^{\star}_{k+1}(f(x, w_k))\big\}\Big\},
\end{equation}
for all stages $k=0, 1, \ldots, N-1$ and states $x \in [0,1]$.

The first term in the maximization \eqref{eq:bellman} corresponds to the decision to stop ($u=\mathsf{S}$), which yields the payoff $g(x)$. The second term corresponds to the decision to continue ($u=\mathsf{C}$), which incurs the cost $c$ and leads to a new state. We denote the value of continuing under an optimal policy by
\begin{align*}
Q^{\star}_k(x) = -c + E\{V^{\star}_{k+1}(f(x, w_k))\}.
\end{align*}
Given this notation, the optimal policy can be expressed as
\begin{align*}
  \mu_k^{\star}(x) &=
                     \begin{dcases}
                       \mathsf{S}, & \text{if } g(x) \geq Q^{\star}_k(x),\\
                       \mathsf{C}, & \text{otherwise},
                     \end{dcases}  && k=0,1,\hdots,N-1.
\end{align*}
The above structure implies that the sequence $(V^{\star}_k)_{k=0}^{N}$ is the smallest superharmonic majorant of $g$.\footnote{A sequence $(h_k)_{k=0}^{N}$ is \emph{superharmonic} with respect to the system $f(x,w)$ if $h_k(x) \geq -c + E\{h_{k+1}(f(x,w))\}$ for all $x$ and $k < N$. It is a \emph{superharmonic majorant} of $g$ if additionally $h_k(x) \geq g(x)$ everywhere.} In other words, the stopping set is the set of states where $V^{\star}_k$ equals $g$, while the continuation set is the set where $V^{\star}_k$ exceeds $g$. This characterization admits a geometric interpretation of the optimal stopping policy, as illustrated in Fig.~\ref{fig:harmonic}.

\begin{figure}[H]
  \centering
  \scalebox{0.75}{
    \includegraphics{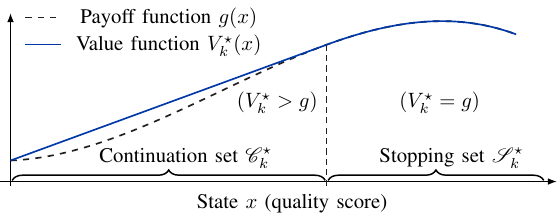}
  }
  \caption{Geometric characterization of the optimal stopping policy.}
  \label{fig:harmonic}
\end{figure}
\begin{remark}
\textit{When $c = 0$, the Bellman equation \eqref{eq:bellman} reduces to $V^{\star}_k(x) = \max\{g(x), E\{V^{\star}_{k+1}(f(x, w_k))\}\}$. In this case, the process $(V^{\star}_k(x_k))_{k=0}^{N}$ is the Snell envelope of the payoff process $(g(x_k))_{k=0}^{N}$ \cite{Snell1952TAMS} and the problem corresponds to the classical optimal stopping formulation of Dynkin~\cite{dynkin1963optimum}.}
\end{remark}
\begin{figure*}
  \centering
  \scalebox{0.74}{
    \includegraphics{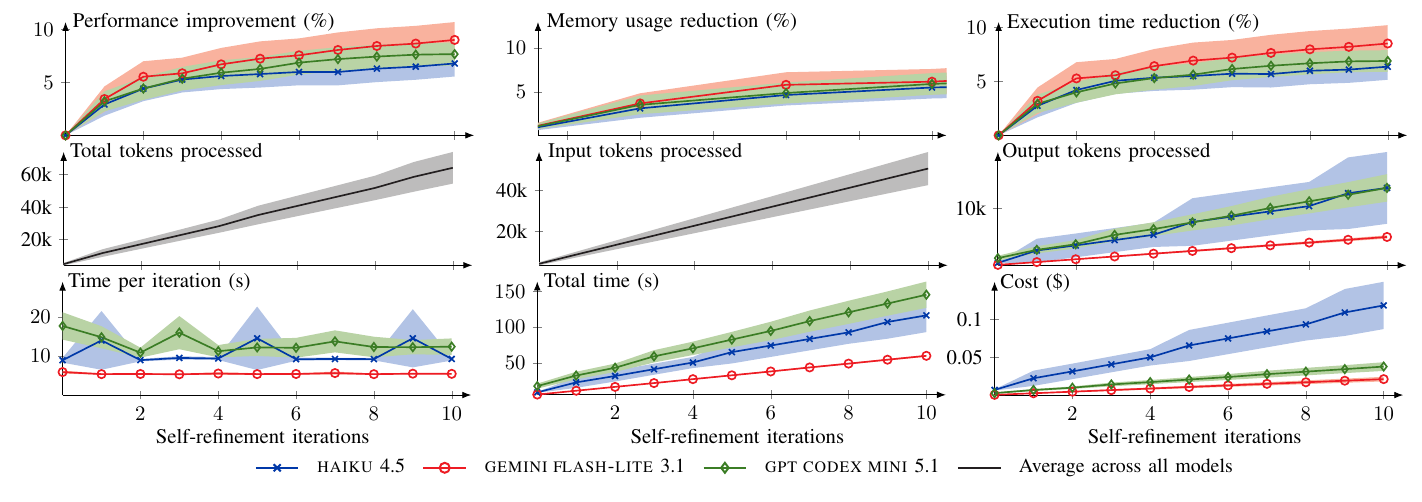}
  }
\caption{Empirical measurements of the self-refinement process across three frontier models on \textsc{effibench} \cite{effibench}. The top row shows performance metrics (execution time and memory usage) as functions of the refinement iteration. The middle row shows cumulative token consumption (total, input, and output). The bottom row shows the refinement time (per iteration and cumulative) and cumulative monetary cost. Curves show the mean values and shaded regions indicate one standard deviation. The cost is computed from per-token charges as listed by each model provider on March 17, 2026.}
  \label{fig:iterations}
\end{figure*}
\section{Identification of the Stopping Problem}\label{sec:id}
The optimal stopping problem formulated in the preceding section is defined by three components: the transition dynamics $f$, the continuation cost $c$, and the payoff function $g$; cf.~\eqref{eq:objective}. In this section, we identify these components based on empirical observations of three foundation models. 

\subsection{Example Use Case: Code Optimization}
The optimal stopping formulation presented in \eqref{eq:objective} is general and applies to any task that a foundation model can be used to solve, provided that the output of the model can be associated with a quality score. However, for the purpose of analysis and experimental validation, we instantiate the stopping problem for a specific \textit{code optimization use case}. 

In this use case, the input to the foundation model is a programming task and the output is code for solving the task. At each refinement iteration, the generated code is executed against a test suite that measures its correctness, execution time, and memory usage. These measurements are aggregated into a candidate score $\tilde{x}_k \in [0,1]$ defined as 
\begin{subequations}\label{eq:score}
\begin{align}
  \tilde{x}_k &=
        \begin{dcases}
          0, & \text{if tests fail},\\
          \operatorname{clip}\!\left(\frac{T_{\max} - T_k}{T_{\max} - T_{\min}},\; 0,\; 1 \right), & \text{otherwise},
        \end{dcases}\\
  T_k &= \frac{1}{T_{\mathrm{ref}}} \int_0^{\mathcal{T}_k} m_k(t)\, \mathrm{d}t,
\end{align}
\end{subequations}
where $m_k(t)$ is the memory footprint at time~$t$ when executing the code generated at iteration $k$, $\mathcal{T}_k$ is the total execution time, $\operatorname{clip}(\cdot,0,1)$ restricts the input to the unit interval, and $T_{\mathrm{ref}}$ is the memory-over-time integral of a given reference code, as defined in \cite{effibench}. The ratio~$T_k$ thus captures both execution time and memory consumption in a single metric. The constants $T_{\min}$ and $T_{\max}$ are normalization bounds chosen so that code that passes the tests with $T_k \leq T_{\min}$ receives the maximal score $\tilde{x}_k = 1$, while failing code and code that passes the tests with $T_k \geq T_{\max}$ receive $\tilde{x}_k = 0$.

After computing the candidate score $\tilde{x}_k$ through \eqref{eq:score}, we update the state of the stopping problem, which is the score of the best code retained so far, i.e., $x_k = \max\{x_{k-1}, \tilde{x}_k\}$ with initial state $x_0=\tilde{x}_0$. Next, we feed the updated state $x_k$ to the stopping policy $\mu$, which decides whether to stop or continue the refinement process by generating new code.
\subsection{Data Collection}
We collect data using \textsc{effibench} \cite{effibench}, which is a coding benchmark for large language models. This benchmark consists of programming tasks for which solutions are assessed based on the candidate score defined in \eqref{eq:score}. We apply three frontier foundation models to this benchmark: \textsc{haiku} 4.5 \cite{anthropic2024claude3}, \textsc{gemini flash-lite} 3.1 \cite{geminiteam2024geminifamilyhighlycapable}, and \textsc{gpt codex mini} 5.1 \cite{openai2024gpt4technicalreport}. For each model, we run the self-refinement loop illustrated in Fig.~\ref{fig:method} for $N=10$ iterations across $50$ tasks.

Figure~\ref{fig:iterations} (on the next page) summarizes the collected data. The top row shows that all three models improve code efficiency through self-refinement, but with diminishing returns. Most of the improvement occurs within the first $3-4$ iterations. Moreover, the middle and bottom rows in Fig.~\ref{fig:iterations} show that token consumption and monetary cost grow approximately linearly with the number of refinement iterations. Together, these trends illustrate the trade-off between the quality of the generated output and the cost of refinement.

\subsection{System Identification}\label{sec:sys_id}
We use the collected data to estimate the parameters of the stopping problem as follows. First, we define the continuation cost $c$ [cf.~\eqref{eq:objective}] to be the average monetary cost of performing a self-refinement iteration. This cost depends on the choice of foundation model and is defined in Table~\ref{tab:cost}.
\begin{table}[H]
  \centering
  \scalebox{0.75}{
    \begin{tabular}{ccc} \toprule
\rowcolor{lightgray}
      {\textit{Foundation model}} & {\textit{Continuation cost $c$}} & {\textit{Explanation}} \\ \midrule
      \textsc{haiku} 4.5 & 0.01 & Avg. cost (\$) per iteration; see Fig.~\ref{fig:iterations}.\\
      \textsc{gemini flash-lite} 3.1 & 0.0025 & Avg. cost (\$) per iteration; see Fig.~\ref{fig:iterations}.\\
      \textsc{gpt codex mini} 5.1 & 0.005 & Avg. cost (\$) per iteration; see Fig.~\ref{fig:iterations}.\\            
  \end{tabular}}
  \caption{The continuation cost for the three foundation models. (Computed based on per-token charges as of March 17, 2026.)}\label{tab:cost}
\end{table}
Second, we define the payoff function $g$ as $g(x)=\beta x$, where $\beta > 0$ is a weighting factor that allows controlling the relative importance of output quality compared to refinement cost. This parameter is not estimated from data; rather, it is a configuration parameter that can be adapted to the use case.

Lastly, we estimate the system function $f$ [cf.~\eqref{eq:dynamics}] from empirical observations of the refinement process. In particular, we model the transition dynamics as a regression problem where the candidate score $\tilde{x}_{k+1}$ [cf.~\eqref{eq:score}] is predicted from the current state $x_k$; the state transition then follows from $x_{k+1}=\max\{x_k,\tilde{x}_{k+1}\}$. To capture both the nonlinear relationship and the inherent uncertainty in the refinement process, we estimate $f$ using a Gaussian process (GP) model.

Formally, we assume that $f$ can be represented as
\begin{align}
x_{k+1} &= f(x_k, w_k) = \min\{1, \max\{x_k, q(x_k) + w_k\}\},\label{eq:q_def}
\end{align}
where $k = 0,1,\ldots, N-1$ is the stage of the refinement process, $q$ is an unknown \textit{score function}, and $w_k$ is a zero-mean Gaussian noise variable with variance $\sigma^2$. The $\max$ operator in \eqref{eq:q_def} reflects the fact that the previous code is retained if the newly generated code achieves a lower score. Similarly, the $\min$ operator is used to ensure that $x \in [0,1]$.

To estimate the unknown function $q$, we use a Gaussian process prior, which allows us to obtain a posterior distribution over functions conditioned on the observed data. Specifically, we define the prior as $q \sim \mathcal{GP}(m, \kappa)$, where $m$ is the mean function and $\kappa$ is the covariance function of the GP. In this paper, we define these functions as
\begin{subequations}\label{gp_prior}
\begin{align}
m(x) &=x, && \\
\kappa(x,x') &=\left(1 + \sqrt[]{5}r + \frac{5r^2}{3}\right)\exp\left(-\sqrt[]{5}r\right),\label{eq:covariance}
\end{align}
\end{subequations}
for all states $x,x' \in [0,1]$, where $r = |x-x'|$. This covariance function encodes the assumption that $q$ varies smoothly over the input space. Similarly, the mean function encodes that refinement preserves the current quality score.

Given this prior, we update it using the empirical observations from evaluations on \textsc{effibench} \cite{effibench} (see Fig.~\ref{fig:iterations}) and Bayes' rule. (We set $\sigma^2$ equal to the variance estimated during the GP fit.)\footnote{See Rasmussen and Williams for detailed formulas of the Bayesian updates \cite[Def.~2.1]{Rasmussen2006Gaussian}. Due to clipping [cf.~\eqref{eq:score}], a candidate score $\tilde{x}_{k+1} \in \{0,1\}$ may correspond to a value of $q(x_k)+w_k$ outside $[0,1]$. We nevertheless treat all candidate scores as direct observations of $q(x_k)+w_k$, which yields a Gaussian approximation of the posterior.} We denote the mean of the resulting posterior as $\tilde{q}(x)=E\{q(x) \mid \text{observed data}\}$. Figure~\ref{fig:gps} shows the posteriors for the three foundation models. We observe that the model $\tilde{q}(x)$ exhibits diminishing returns as the state $x$ increases. In particular, for lower-quality scores (states), the gain from refinement is relatively large, while for higher-quality scores, the improvement becomes smaller. 

\begin{figure}[H]
  \centering
  \scalebox{0.75}{
    \includegraphics{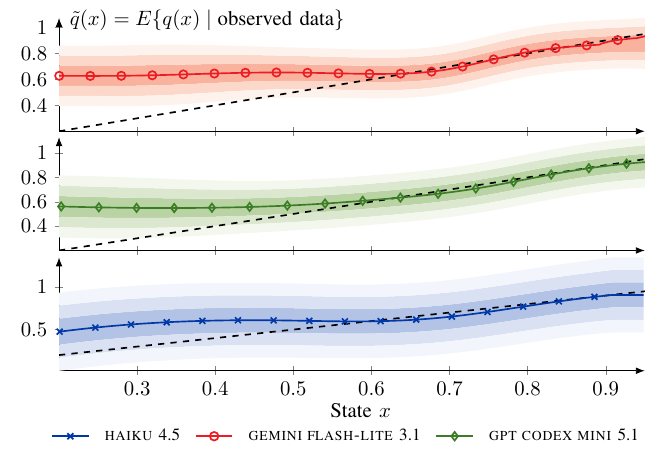}
  }
\caption{Estimation of the score function $q$ [cf.~\eqref{eq:q_def}] when applying the foundation models to \textsc{effibench} \cite{effibench}. Curves show the mean values of the Gaussian processes; shaded regions indicate one, two, and three standard deviations (darker to lighter shades). The dashed lines show $\tilde{q}(x)=x$.}
  \label{fig:gps}
\end{figure}

From the identified models in Fig.~\ref{fig:gps}, we extract the following two structural observations of the system dynamics. 

\begin{observation}[Monotonicity]\label{obs:monotonicity}
The identified model $\tilde{q}$ [cf.~\eqref{eq:q_def}] is nondecreasing on $[0,1]$; see Fig.~\ref{fig:gps}. Since the pointwise minimum and maximum of two nondecreasing functions are also nondecreasing, it follows that $f(x,w) = \min\{1,\max\{x,\, \tilde{q}(x) + w\}\}$ is nondecreasing in $x$.
\end{observation}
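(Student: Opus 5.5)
The claim has two parts: an empirical premise and a deterministic deduction. The premise, that $\tilde{q}$ is nondecreasing on $[0,1]$, is read off the fitted GP posterior means in Fig.~\ref{fig:gps}. I would not try to derive it from the prior \eqref{gp_prior}. A posterior mean $\tilde{q}(x)=m(x)+\kappa(x,X)(K+\sigma^2 I)^{-1}(y-m(X))$ with $m(x)=x$ is not monotone in general. It is, however, a finite combination of smooth Mat\'ern-$5/2$ kernel sections plus the identity, so $\tilde q$ is $C^1$ on the compact set $[0,1]$. Monotonicity can therefore be certified for each fitted model. I would evaluate $\tilde q'$ on a fine grid and bound the error between grid points using a Lipschitz bound on $\tilde q'$; such a bound follows from the closed-form second derivative of $\kappa$ and the norm of the weight vector $(K+\sigma^2 I)^{-1}(y-m(X))$. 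I expect this certification to be the main obstacle, since it is the only non-formal step and it depends on the data. For the proposition as stated, I would treat it as the observed hypothesis and move on.

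Given that premise, fix $w\in\Re$ and take $x\le x'$ in $[0,1]$. The map $x\mapsto \tilde q(x)+w$ is nondecreasing because adding a constant preserves order, and $x\mapsto x$ is nondecreasing. Next I would prove the elementary lemma: if $h_1,h_2$ are nondecreasing, then $\max\{h_1,h_2\}$ and $\min\{h_1,h_2\}$ are nondecreasing. For $\max$, $h_i(x)\le h_i(x')\le \max\{h_1(x'),h_2(x')\}$ for $i=1,2$, and taking the maximum over $i$ on the left gives the claim. The $\min$ case is symmetric: $\min\{h_1(x),h_2(x)\}\le h_i(x)\le h_i(x')$ for each $i$, so it is at most $\min\{h_1(x'),h_2(x')\}$.

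Applying the lemma first with $h_1(x)=x$ and $h_2(x)=\tilde q(x)+w$ shows that $x\mapsto\max\{x,\tilde q(x)+w\}$ is nondecreasing. Applying it again with the constant function $1$, which is trivially nondecreasing, shows that $x\mapsto\min\{1,\max\{x,\tilde q(x)+w\}\}=f(x,w)$ is nondecreasing. This holds for every $w$, so it holds almost surely under $P_w$, which is the form needed later in monotonicity and threshold arguments for $V^\star_k$ via \eqref{eq:bellman}.

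Finally, I would check that $f(x,w)\in[0,1]$ so that the composition stays in the domain. Because $x\ge0$, we have $\max\{x,\cdot\}\ge 0$, and the outer $\min$ caps the value at $1$. Nothing else is needed.
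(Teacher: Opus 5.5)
Your proposal is correct and follows the paper's argument. You take monotonicity of $\tilde{q}$ as the empirical premise read off Fig.~\ref{fig:gps}, then use the fact that pointwise $\max$ and $\min$ of nondecreasing functions are nondecreasing, applied first to $x$ and $\tilde{q}(x)+w$ and then to the constant $1$. Your grid-plus-Lipschitz certification of $\tilde{q}'\geq 0$ goes beyond the paper, which simply reads the property from the plot; that check can only succeed where $\tilde{q}'$ is strictly positive with some margin, and the deduction itself does not need it.
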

\begin{observation}[Diminishing returns]\label{obs:diminishing}
The difference $\tilde{q}(x)-x$ [cf.~\eqref{eq:q_def}] is nonincreasing on $[0,1]$; see Fig.~\ref{fig:gps}. Since the operations $\min\{1-x,\hdots\}$ and $\max\{0,\hdots\}$ preserve this nonincreasing structure, it follows that $f(x,w)-x=\min\{1-x, \max\{0,\tilde{q}(x)-x+w\}\}$ is nonincreasing in $x$.
\end{observation}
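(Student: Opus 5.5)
The plan splits into an elementary structural part and an empirical premise, which should be kept separate. The structural part is a pointwise identity followed by a closure argument. First, I will verify the identity
\begin{align*}
f(x,w)-x = \min\{1,\max\{x,\tilde{q}(x)+w\}\}-x = \min\{1-x,\max\{0,\tilde{q}(x)-x+w\}\}.
\end{align*}
It holds because translation by $-x$ commutes with both $\max$ and $\min$. That is, $\max\{a,b\}-x=\max\{a-x,b-x\}$ and $\min\{a,b\}-x=\min\{a-x,b-x\}$ for all reals $a,b$. Applying this twice, first to the outer $\min$ and then to the inner $\max$, gives the displayed form. The identity holds for every $x\in[0,1]$ and every realization $w$, so no measurability or distributional issues arise.

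Second, I will fix $w$ and establish monotonicity of each building block. By the premise of the observation, $h_w(x)=\tilde{q}(x)-x+w$ is nonincreasing in $x$, since adding a constant $w$ preserves this. The map $t\mapsto\max\{0,t\}$ is nondecreasing, so its composition with the nonincreasing $h_w$ is nonincreasing. The function $x\mapsto 1-x$ is strictly decreasing. The pointwise minimum of two nonincreasing functions is nonincreasing: if $x\le y$, each argument at $y$ is at most the corresponding argument at $x$, so the minimum at $y$ is at most the minimum at $x$. Hence $f(x,w)-x$ is nonincreasing in $x$ for every fixed $w$, which is the claim.

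The step I expect to be the real obstacle is the premise itself, namely that the GP posterior mean satisfies $\tilde{q}(x)-x$ nonincreasing on all of $[0,1]$. This is read off Fig.~\ref{fig:gps} rather than proved. To make it rigorous, I would use the closed form of the posterior mean, $\tilde{q}(x)=x+\kappa(x,X)(K+\sigma^2 I)^{-1}(y-X)$, where $X$ and $y$ are the stacked training inputs and targets. This form follows from the prior mean $m(x)=x$ in \eqref{gp_prior}. Then $\tilde{q}(x)-x$ is a finite combination of Mat\'ern-$5/2$ kernel sections, and these are $C^2$. Its derivative could therefore be evaluated on a fine grid and shown to be nonpositive. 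A bound on the second derivative, obtained from the explicit kernel \eqref{eq:covariance} and the coefficient vector, would then certify the sign between grid points. I would present this certification as a numerical check for each of the three fitted models, and the structural argument above as the proof proper.
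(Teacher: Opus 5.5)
Your proposal is correct and follows the same route as the paper. Both take the nonincreasing property of $\tilde{q}(x)-x$ as an empirical premise read off Fig.~\ref{fig:gps}, pass the shift by $-x$ through the outer $\min$ and inner $\max$ to get $\min\{1-x,\max\{0,\tilde{q}(x)-x+w\}\}$, and then use that $t\mapsto\max\{0,t\}$ is nondecreasing and that a pointwise minimum of nonincreasing functions is nonincreasing. The paper does not try to certify the premise, so your grid-plus-second-derivative check is an optional strengthening beyond what it does; note that this check only certifies strict negativity with a margin, so it can fail wherever the derivative of $\tilde{q}(x)-x$ touches zero.
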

These observations will be exploited in the next section to derive theoretical properties of an optimal stopping policy.
\section{Structure of an Optimal Stopping Policy}\label{sec:structural_properties}
Leveraging the identified system dynamics (see Fig.~\ref{fig:gps}), we now derive structural properties of an optimal stopping policy. In the following analysis, we treat Observations~\ref{obs:monotonicity}--\ref{obs:diminishing} as exact characteristics of the system function $f$ [cf.~\eqref{eq:dynamics}] and assume the linear payoff structure defined in \S\ref{sec:sys_id}. That is, we model the GP posterior mean $\tilde{q}$ (which is continuous) as the score function $q$ in \eqref{eq:q_def} and define the payoff function $g$ as $g(x)=\beta x$, where $\beta > 0$ is a weighting factor.

\begin{lemma}\label{lem:value_props}
For each stage $k \in \{0,1,\hdots,N\}$,
\begin{enumerate}
\item The value function $V_k^{\star}(x)$ is nondecreasing in $x$.
\item The difference $V_k^{\star}(x) - g(x)$ is nonincreasing in $x$.
\item The functions $V_k^{\star}$ and $Q^{\star}_k$ (for $k<N$) are continuous.  
\end{enumerate}  
\end{lemma}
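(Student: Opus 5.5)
We argue by backward induction on $k$, starting from the terminal condition $V^{\star}_N = g$ and propagating the three properties through the Bellman equation \eqref{eq:bellman}. At $k=N$ all three claims are immediate: $g(x)=\beta x$ is nondecreasing and continuous, and $V^{\star}_N - g \equiv 0$ is trivially nonincreasing. Suppose now that $V^{\star}_{k+1}$ is nondecreasing, continuous, and that $V^{\star}_{k+1}-g$ is nonincreasing. We show the same for $V^{\star}_k$, and that $Q^{\star}_k$ is continuous.

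\emph{Monotonicity.} By Observation~\ref{obs:monotonicity}, $f(x,w)$ is nondecreasing in $x$ for every fixed $w$. Composing with the nondecreasing $V^{\star}_{k+1}$ gives that $x\mapsto V^{\star}_{k+1}(f(x,w))$ is nondecreasing pointwise in $w$. Taking expectations preserves this, so $Q^{\star}_k$ is nondecreasing. Hence $V^{\star}_k=\max\{g,Q^{\star}_k\}$, a maximum of two nondecreasing functions, is nondecreasing.

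\emph{Diminishing difference.} This is the step that needs care. We write
\begin{align*}
V^{\star}_k(x)-g(x) &= \max\{0,\; Q^{\star}_k(x)-\beta x\}, \\
Q^{\star}_k(x)-\beta x &= -c + E\big\{\big(V^{\star}_{k+1}-g\big)(f(x,w))\big\} + \beta\, E\{f(x,w)-x\},
\end{align*}
using linearity of $g$ to split $g(f(x,w)) - g(x) = \beta(f(x,w)-x)$. The last term is nonincreasing in $x$ by Observation~\ref{obs:diminishing}. The middle term composes the nonincreasing function $V^{\star}_{k+1}-g$ (induction hypothesis) with the nondecreasing map $x\mapsto f(x,w)$, which yields a nonincreasing function of $x$ for each $w$; its expectation is therefore nonincreasing as well. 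Consequently $Q^{\star}_k-g$ is nonincreasing, and so is $\max\{0,Q^{\star}_k-g\}=V^{\star}_k-g$. The key point is that linearity of $g$ is exactly what lets the two observations combine additively. Note that we need both properties 1 and 2 of the hypothesis at stage $k+1$, together with both observations, so the induction must carry the properties jointly.

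\emph{Continuity.} Since $q=\tilde q$ is continuous, $f(\cdot,w)$ is continuous for each $w$. Because $V^{\star}_{k+1}$ is continuous on the compact set $[0,1]$, it is bounded, and $f$ takes values in $[0,1]$. Hence $V^{\star}_{k+1}(f(x_n,w))\to V^{\star}_{k+1}(f(x,w))$ for every $w$ whenever $x_n\to x$, with a constant dominating function. Dominated convergence then gives continuity of $Q^{\star}_k$, and $V^{\star}_k=\max\{g,Q^{\star}_k\}$ is continuous as a maximum of continuous functions.

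One technical point to check is measurability and integrability of $V^{\star}_{k+1}(f(x,w))$, which follows from continuity and boundedness on $[0,1]$. This closes the induction.
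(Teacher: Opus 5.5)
Your proposal is correct and follows the paper's proof essentially step for step. Both use backward induction from $V^{\star}_N=g$ and get monotonicity of $Q^{\star}_k$ from Observation~\ref{obs:monotonicity}. Both use the same linearity-based split $Q^{\star}_k(x)-g(x)=-c+E\{(V^{\star}_{k+1}-g)(f(x,w))\}+\beta E\{f(x,w)-x\}$, controlled by Observations~\ref{obs:monotonicity}--\ref{obs:diminishing}. Both then obtain continuity of $Q^{\star}_k$ by dominated convergence.

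One small overstatement in your remark: property 2 at stage $k$ uses only property 2 at stage $k+1$, not property 1. The three claims could therefore be proved by separate inductions, though carrying them jointly is harmless.
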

\begin{proof}
We proceed by induction on $k$.

\emph{Base case}: $V_N^{\star}(x)=g(x)=\beta x$ is nondecreasing and continuous, and $V_N^{\star}(x)-g(x)=0$ is nonincreasing.

\emph{Inductive step}: Assume that $V_{k+1}^{\star}$ is nondecreasing and continuous, and that $V_{k+1}^{\star}(y)-g(y)$ is nonincreasing in $y$. Since $f(x,w)$ is nondecreasing in $x$ by Observation~\ref{obs:monotonicity} and $V_{k+1}^{\star}$ is nondecreasing by the inductive hypothesis, $Q^{\star}_k(x) = -c + E\{V_{k+1}^{\star}(f(x, w_k))\}$ is nondecreasing in $x$. Since $g(x)$ is also nondecreasing, $V_k^{\star}(x)=\max\{g(x),\, Q^{\star}_k(x)\}$ is nondecreasing. This completes the proof of the first property.

For the second property, we start by expanding $V_k^{\star}(x)-g(x)$ as $V_k^{\star}(x)-g(x)=\max\{0,\, Q^{\star}_k(x)-g(x)\}$, which implies that it suffices to show that $Q^{\star}_k(x)-g(x)$ is nonincreasing. We expand this expression as
\begin{align*}
Q^{\star}_k(x) - g(x) &= -c + E\big\{V_{k+1}^{\star}(f(x, w_k)) - g(f(x,w_k))\big\}\\
&\quad  + \beta\, E\big\{f(x, w_k) - x\big\}.
\end{align*}
By the inductive hypothesis, $V_{k+1}^{\star}(y)-g(y)$ is nonincreasing in $y$. Since $f(x,w)$ is nondecreasing in $x$ by Observation~\ref{obs:monotonicity}, the first expectation is nonincreasing in $x$. By Observation~\ref{obs:diminishing}, $f(x, w_k)-x$ is nonincreasing in $x$, and so is the second expectation. As the sum of nonincreasing functions is nonincreasing, $Q^{\star}_k(x)-g(x)$ is nonincreasing in $x$.

For the third property, $f(x,w)=\min\{1,\max\{x,\,q(x)+w\}\}$ is continuous in $x$ for each $w$ since $q$ is continuous. Hence, since $V_{k+1}^{\star}$ is continuous by the inductive hypothesis, $x \mapsto V_{k+1}^{\star}(f(x,w))$ is continuous and bounded for each $w$. As a consequence, $Q^{\star}_k(x)=-c+E\{V_{k+1}^{\star}(f(x,w_k))\}$ is continuous by the dominated convergence theorem. Therefore, $V_k^{\star}=\max\{g,\,Q^{\star}_k\}$ is the maximum of two continuous functions, which implies that $V_k^{\star}$ is continuous.
\end{proof}

Given the above lemma, there exists an optimal policy with threshold structure, as formally stated below.

\begin{proposition}\label{prop:threshold}
There exist an optimal policy $\mu^{\star}$ and thresholds $\alpha_0,\alpha_1, \hdots, \alpha_{N-1} \in [0,1]$ such that
\begin{equation}\label{eq:threshold_policy}
  \mu_k^{\star}(x) =
  \begin{dcases}
    \mathsf{S} & \text{if }x \geq \alpha_k,\\
    \mathsf{C} & \text{if }x < \alpha_k,
  \end{dcases}
\quad\quad\quad k=0,1,\hdots,N-1.  
\end{equation}
\end{proposition}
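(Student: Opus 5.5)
The plan is to read the thresholds directly off Lemma~\ref{lem:value_props}. Recall that the Bellman characterization gives an optimal policy that stops at stage $k<N$ exactly when $g(x)\geq Q^{\star}_k(x)$. Equivalently, it stops exactly when $D_k(x)\triangleq Q^{\star}_k(x)-g(x)\leq 0$, so the optimal stopping set is $\mathscr{S}_k^{\star}=\{x\in[0,1]\mid D_k(x)\leq 0\}$. The proof therefore reduces to showing that this set is an interval of the form $[\alpha_k,1]$ (or empty, handled separately).

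First, I will record that $D_k$ is nonincreasing on $[0,1]$. This was shown inside the proof of the second property of Lemma~\ref{lem:value_props}, where $Q^{\star}_k-g$ was shown nonincreasing using the inductive hypothesis and Observations~\ref{obs:monotonicity}--\ref{obs:diminishing}. I will restate that step for every $k<N$ rather than only for the conclusion about $V^{\star}_k-g$. Second, by the third property of the lemma, $Q^{\star}_k$ is continuous, and $g(x)=\beta x$ is continuous, so $D_k$ is continuous.

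With these two facts, I define
\begin{equation*}
\alpha_k=\inf\{x\in[0,1]\mid D_k(x)\leq 0\}.
\end{equation*}
If $D_k(x)\leq 0$ for some $x$, monotonicity shows that $D_k(y)\leq 0$ for every $y\geq x$, so the set is an up-interval. Continuity then shows that the infimum is attained, so $\mathscr{S}_k^{\star}=[\alpha_k,1]$. On this set the rule ``stop iff $x\geq\alpha_k$'' coincides exactly with $\mu^{\star}_k$, which gives \eqref{eq:threshold_policy}.

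The main obstacle is the degenerate case in which $D_k(x)>0$ for all $x\in[0,1]$, so the stopping set is empty. Then no $\alpha_k\in[0,1]$ reproduces the Bellman rule literally, since setting $\alpha_k=1$ would force stopping at $x=1$. I plan to resolve this by showing that $x=1$ is absorbing. Indeed, $f(1,w)=\min\{1,\max\{1,\cdot\}\}=1$, so $Q^{\star}_k(1)=-c+V^{\star}_{k+1}(1)$. By backward induction, $V^{\star}_{j}(1)=\beta$ for all $j$: at $x=1$ continuing only subtracts $c>0$. Hence $D_k(1)=-c<0$, so the stopping set is never empty and the degenerate case cannot occur.

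The same computation shows that $1\in\mathscr{S}_k^{\star}$ always, so $\alpha_k\in[0,1]$ is well defined for every $k$. This completes the plan.
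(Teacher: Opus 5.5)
Your proposal is correct and follows essentially the same route as the paper: the stopping set $\{x \mid g(x)\geq Q^{\star}_k(x)\}$ is an upper set because $Q^{\star}_k-g$ is nonincreasing (as established inside the proof of Lemma~\ref{lem:value_props}). It is closed by continuity, and it is nonempty because $x=1$ is absorbing, so $Q^{\star}_k(1)=g(1)-c<g(1)$; together these give $\mathscr{S}_k^{\star}=[\alpha_k,1]$. The only cosmetic difference is how $V^{\star}_{k+1}(1)=\beta$ is obtained: you use backward induction on the Bellman equation, whereas the paper argues directly that continuing $j$ times from state $1$ yields $g(1)-jc$.
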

\begin{proof}
The optimal policy stops at stage $k$ if and only if $g(x) \geq Q^{\star}_k(x)$. By Lemma~\ref{lem:value_props}, $Q^{\star}_k(x)-g(x)$ is nonincreasing in $x$. Therefore, if $g(x') \geq Q^{\star}_k(x')$ for some $x' \in [0,1]$, then $g(x) \geq Q^{\star}_k(x)$ for all $x \geq x'$. This means that the stopping set $\mathscr{S}_k^{\star} = \{x \mid x \in [0,1], g(x) \geq Q^{\star}_k(x)\}$ is an upper set, i.e., if $x \in \mathscr{S}_k^{\star}$ then $x'\in \mathscr{S}_k^{\star}$ for all $x' \in [x,1]$. It remains to show that $\mathscr{S}_k^{\star}$ is nonempty and closed. For nonemptiness, note that $f(1,w)=1$ for all $w$ [cf.~\eqref{eq:q_def}]. Hence, any policy that continues $j \geq 0$ times from state $1$ receives $g(1)-jc$, which is maximized at $j=0$. Therefore $V^{\star}_{k+1}(1)=g(1)$, which gives $Q^{\star}_k(1)=g(1)-c<g(1)$ and thus $1 \in \mathscr{S}_k^{\star}$. Further, $\mathscr{S}_k^{\star}$ is closed since $g-Q^{\star}_k$ is continuous by Lemma~\ref{lem:value_props}. A nonempty closed upper subset of $[0,1]$ has the form $[\alpha_k,1]$ with $\alpha_k = \min \mathscr{S}_k^{\star}$, which yields \eqref{eq:threshold_policy}.
\end{proof}

\begin{proposition}\label{prop:monotone_thresholds}
The optimal thresholds [cf.~\eqref{eq:threshold_policy}] are nonincreasing in the stage $k$, i.e., $\alpha_0 \geq \alpha_1 \geq \cdots \geq \alpha_{N-1}$.
\end{proposition}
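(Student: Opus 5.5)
The key claim is that the optimal value functions are nonincreasing in the stage index: $V^{\star}_k(x) \geq V^{\star}_{k+1}(x)$ for all $x \in [0,1]$ and all $k \in \{0,1,\hdots,N-1\}$. From this, $Q^{\star}_{k}(x) \geq Q^{\star}_{k+1}(x)$ follows for $k \leq N-2$, because $Q^{\star}_k(x) = -c + E\{V^{\star}_{k+1}(f(x,w_k))\}$ and the $w_k$ are i.i.d. with common distribution $P_w$. Hence the stopping sets are nested:
\begin{align*}
\mathscr{S}_k^{\star} = \{x \mid g(x) \geq Q^{\star}_k(x)\} \subseteq \{x \mid g(x) \geq Q^{\star}_{k+1}(x)\} = \mathscr{S}_{k+1}^{\star}.
\end{align*}
Proposition~\ref{prop:threshold} gives $\mathscr{S}_k^{\star} = [\alpha_k,1]$ with $\alpha_k = \min \mathscr{S}_k^{\star}$, so the inclusion $[\alpha_k,1] \subseteq [\alpha_{k+1},1]$ yields $\alpha_k \geq \alpha_{k+1}$ directly.

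I would prove $V^{\star}_k \geq V^{\star}_{k+1}$ by backward induction on $k$.

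\emph{Base case $k=N-1$.} We have $V^{\star}_{N-1}(x) = \max\{g(x), Q^{\star}_{N-1}(x)\} \geq g(x) = V^{\star}_N(x)$.

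\emph{Inductive step.} Assume $V^{\star}_{k+1} \geq V^{\star}_{k+2}$ pointwise. Because the $w_k$ are identically distributed, taking expectations gives
\begin{align*}
E\{V^{\star}_{k+1}(f(x,w))\} \geq E\{V^{\star}_{k+2}(f(x,w))\},
\end{align*}
that is, $Q^{\star}_k \geq Q^{\star}_{k+1}$. Taking the maximum with $g$ then gives $V^{\star}_k \geq V^{\star}_{k+1}$.

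An alternative route is a direct policy argument: a decision maker at stage $k$ can imitate any policy available at stage $k+1$ and be forced to stop one step earlier at worst. This dominates the remaining-horizon problem and gives the same inequality. The induction above is cleaner and uses only the Bellman equation \eqref{eq:bellman}.

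The one delicate point is the last threshold $\alpha_{N-1}$, because $Q^{\star}_N$ is not defined. The comparison $\alpha_{N-2} \geq \alpha_{N-1}$ still only requires $Q^{\star}_{N-2} \geq Q^{\star}_{N-1}$, and this follows from $V^{\star}_{N-1} \geq V^{\star}_N$, which is the base case. So all comparisons for $k \leq N-2$ are covered. I also need to check that the threshold definition $\alpha_k = \min \mathscr{S}_k^{\star}$ from Proposition~\ref{prop:threshold} is the one in use, so that set inclusion translates exactly into the ordering of the minima. Both sets are nonempty and closed, so the minima exist and the conclusion follows.
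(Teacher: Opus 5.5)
Your proposal is correct and follows the paper's proof essentially step for step. Both use backward induction from the base case $k=N-1$ to get $V^{\star}_k \geq V^{\star}_{k+1}$, deduce $Q^{\star}_k \geq Q^{\star}_{k+1}$, obtain the nested stopping sets $\mathscr{S}_k^{\star} \subseteq \mathscr{S}_{k+1}^{\star}$, and compare their minima. Your remarks on the index range $k \leq N-2$ and on the i.i.d.\ noise only make explicit what the paper leaves implicit.
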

\begin{proof}
We start by showing that $V_k^{\star}(x) \geq V_{k+1}^{\star}(x)$ for all stages $k \in \{0,1,\hdots,N-1\}$ and states $x \in [0,1]$.

\emph{Base case} ($k=N-1$): We have $V_{N-1}^{\star}(x) = \max\{g(x),\, -c + E\{g(f(x,w))\}\} \geq g(x) = V_N^{\star}(x)$.

\emph{Inductive step}: Assume $V_{k+1}^{\star}(x) \geq V_{k+2}^{\star}(x)$ for all states $x \in [0,1]$. Then
\begin{align*}
  Q^{\star}_k(x) &= -c + E\{V_{k+1}^{\star}(f(x,w))\} \\
                 &\geq -c + E\{V_{k+2}^{\star}(f(x,w))\} \\
         &= Q^{\star}_{k+1}(x).
\end{align*}
Consequently,
\begin{align*}
  V_k^{\star}(x) &= \max\{g(x),\, Q^{\star}_k(x)\} \\
                 &\geq \max\{g(x),\, Q^{\star}_{k+1}(x)\} \\
                 &= V_{k+1}^{\star}(x).
\end{align*}  
Since $Q^{\star}_k(x) \geq Q^{\star}_{k+1}(x)$, we have
\begin{align*}
  \mathscr{S}_k^{\star} &= \{x \mid x \in [0,1], g(x) \geq Q^{\star}_k(x)\} \\
                        &\subseteq \{x \in [0,1] \mid g(x) \geq Q^{\star}_{k+1}(x)\} = \mathscr{S}_{k+1}^{\star}.
\end{align*}
Taking minima of $\mathscr{S}_k^{\star}$ yields $\alpha_k \geq \alpha_{k+1}$.
\end{proof}
The preceding propositions imply that each stopping set $\mathscr{S}_k^{\star}$ has the form $[\alpha_k, 1]$ for some threshold $\alpha_k \in [0,1]$. We now show that these sets are all equal.
\begin{figure*}
  \centering
  \scalebox{0.75}{
    \includegraphics{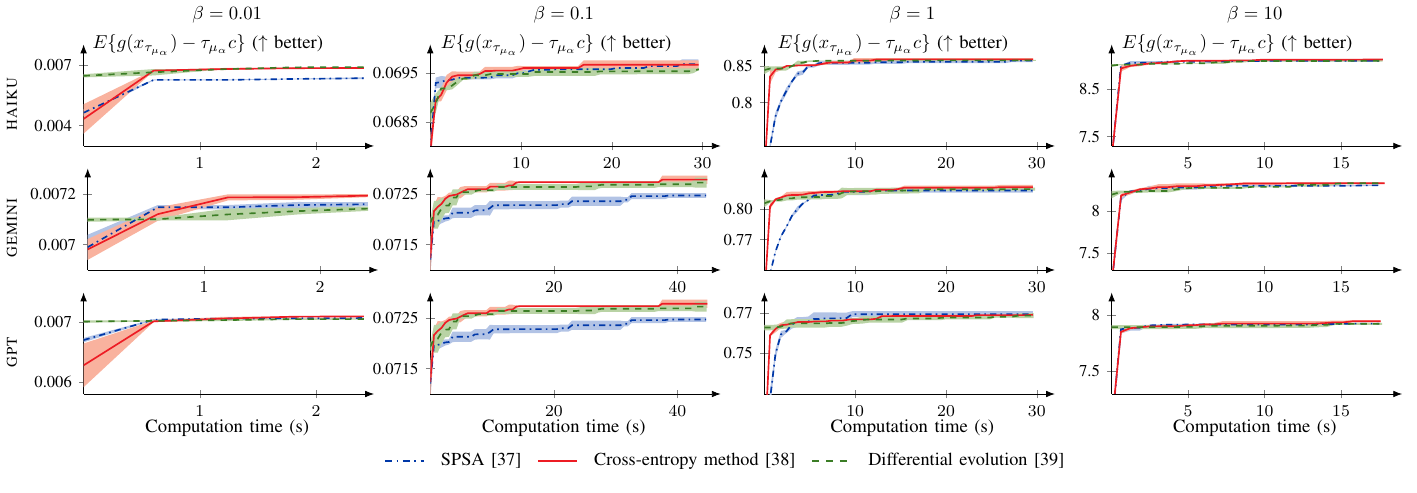}
  }\caption{Convergence curves of three simulation-based optimization methods (SPSA \cite{spsa}, Cross-entropy method \cite{rubinstein1999cross}, and Differential evolution \cite{differential_evolution}) for computing the optimal threshold $\alpha^{\star}$ [cf.~Prop.~\ref{prop:closed}] across the three foundation models and four values of the payoff weight $\beta$ (defined in \S\ref{sec:sys_id}). Curves show mean values and shaded regions indicate one standard deviation over 3 runs with different random seeds. The y-axes indicate the expected value $E\{g(x_{\tau_{\mu_{\alpha}}})-c\tau_{\mu_{\alpha}}\}$ (i.e., the expected payoff minus the refinement costs) and the x-axes indicate the computation times in seconds.}\label{fig:numerical}
\end{figure*}
\begin{proposition}\label{prop:closed}
The optimal stopping sets satisfy $\mathscr{S}_0^{\star} = \mathscr{S}_1^{\star} = \cdots = \mathscr{S}_{N-1}^{\star}$. Equivalently, the optimal thresholds are stage-independent, i.e., $\alpha_0 = \alpha_1 = \cdots = \alpha_{N-1} = \alpha^{\star}$.
\end{proposition}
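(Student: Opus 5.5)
The plan is to combine Proposition~\ref{prop:monotone_thresholds}, which already gives $\mathscr{S}_k^{\star} \subseteq \mathscr{S}_{k+1}^{\star}$ and hence $\mathscr{S}_k^{\star} \subseteq \mathscr{S}_{N-1}^{\star}$ for every $k$, with the reverse inclusion $\mathscr{S}_{N-1}^{\star} \subseteq \mathscr{S}_k^{\star}$. For the reverse inclusion I will use the classical ``monotone stopping problem'' argument: under the one-step look-ahead rule, the stopping region is absorbing for the dynamics.

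Let $A = \mathscr{S}_{N-1}^{\star} = \{x \in [0,1] \mid g(x) \geq -c + E\{g(f(x,w))\}\}$. This is the one-step look-ahead stopping set, since $V_N^{\star}=g$. By Proposition~\ref{prop:threshold}, $A = [\alpha_{N-1},1]$, so $A$ is an upper set.

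The first step is to show that $A$ is closed under transitions. From \eqref{eq:q_def}, $f(x,w) = \min\{1,\max\{x, q(x)+w\}\} \geq x$ for every $w$, because $x \leq 1$. Hence if $x \in A$, then $f(x,w) \geq x \geq \alpha_{N-1}$, so $f(x,w) \in A$ for all $w$. This absorbing property holds because the state tracks the best score retained so far.

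The second step is a backward induction on $k$ showing that $V_k^{\star}(x) = g(x)$ for all $x \in A$ and all $k \in \{0,\hdots,N\}$. The base case $k=N$ is the terminal condition. For the inductive step, suppose $V_{k+1}^{\star} = g$ on $A$. For $x \in A$, the first step gives $f(x,w_k) \in A$ almost surely, so
\begin{align*}
Q_k^{\star}(x) = -c + E\{V_{k+1}^{\star}(f(x,w_k))\} = -c + E\{g(f(x,w_k))\} \leq g(x),
\end{align*}
where the last inequality is the definition of $A$. Hence $x \in \mathscr{S}_k^{\star}$ and $V_k^{\star}(x) = \max\{g(x), Q_k^{\star}(x)\} = g(x)$, which closes the induction. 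It also yields $A \subseteq \mathscr{S}_k^{\star}$ for every $k \leq N-1$.

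Combining the two inclusions gives $\mathscr{S}_k^{\star} = A$ for all $k$. Taking minima, as in Proposition~\ref{prop:threshold}, gives $\alpha_k = \alpha_{N-1} =: \alpha^{\star}$ for all $k$.

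The only delicate point is the absorbing property in the first step. It relies on the $\max\{x,\cdot\}$ in \eqref{eq:q_def}, which guarantees $f(x,w) \geq x$, together with the upper-set form of $A$ from Proposition~\ref{prop:threshold}. I would state both facts explicitly before the induction. Once they are in place, the induction is routine.
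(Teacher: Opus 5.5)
Your proof is correct and rests on the same mechanism as the paper's. Since $f(x,w)\geq x$, the upper-set stopping region is absorbing, so on it the continuation value collapses to the one-step look-ahead value $-c+E\{g(f(x,w))\}$, which does not depend on the stage. The paper applies this to consecutive stages, using $\alpha_{k+1}\geq\alpha_{k+2}$ from Prop.~\ref{prop:monotone_thresholds} to place $f(x,w)$ in $\mathscr{S}_{k+2}^{\star}$; you instead anchor on $A=\mathscr{S}_{N-1}^{\star}$ with a single backward induction, which makes the reverse inclusion independent of Prop.~\ref{prop:monotone_thresholds} and makes explicit that $\alpha^{\star}=\alpha_{N-1}$ is the one-step look-ahead threshold.
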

\begin{proof}
By Prop.~\ref{prop:monotone_thresholds}, we have $\mathscr{S}_k^{\star} \subseteq \mathscr{S}_{k+1}^{\star}$ for all stages $k \in \{0,1,\hdots,N-2\}$. It remains to show the reverse inclusion. Fix $k \in \{0,1,\hdots,N-2\}$ and let $x \in \mathscr{S}_{k+1}^{\star}$, so that $x \geq \alpha_{k+1}$ and $g(x) \geq Q^{\star}_{k+1}(x)$. Since $f(x,w) = \min\{1, \max\{x,\, q(x)+w\}\} \geq x \geq \alpha_{k+1} \geq \alpha_{k+2}$ (where $\alpha_N=0$ since stopping is enforced at stage $N$), we have $f(x,w) \in \mathscr{S}_{k+2}^{\star}$ for all realizations of $w$, and therefore $V_{k+2}^{\star}(f(x,w)) = g(f(x,w))$. Similarly, $f(x,w) \in \mathscr{S}_{k+1}^{\star}$, so $V_{k+1}^{\star}(f(x,w)) = g(f(x,w))$. It follows that
\begin{align*}
Q^{\star}_k(x) &= -c + E\{V_{k+1}^{\star}(f(x,w))\} = -c + E\{g(f(x,w))\} \\
        &= -c + E\{V_{k+2}^{\star}(f(x,w))\} = Q^{\star}_{k+1}(x).
\end{align*}
Since $g(x) \geq Q^{\star}_{k+1}(x) = Q^{\star}_k(x)$, we have $x \in \mathscr{S}_k^{\star}$. Thus $\mathscr{S}_{k+1}^{\star} \subseteq \mathscr{S}_k^{\star}$, which together with $\mathscr{S}_k^{\star} \subseteq \mathscr{S}_{k+1}^{\star}$ gives $\mathscr{S}_k^{\star} = \mathscr{S}_{k+1}^{\star}$ for every $k \in \{0,1,\hdots,N-2\}$.
\end{proof}
Propositions~\ref{prop:threshold}--\ref{prop:closed} establish that, given the identified dynamics in Fig.~\ref{fig:gps} and assuming that Observations~\ref{obs:monotonicity}--\ref{obs:diminishing} are exact properties of the system function, an optimal stopping policy is characterized by a single threshold $\alpha^{\star} \in [0,1]$. This structure implies a simple method for computing an optimal policy: parameterize the stopping policy $\mu$ by a threshold $\alpha$ [cf.~\eqref{eq:threshold_policy}] and estimate the optimal threshold $\alpha^{\star}$ by maximizing the expected value $E\{g(x_{\tau_{\mu_{\alpha}}}) - c\tau_{\mu_{\alpha}}\}$, where $\mu_{\alpha}$ denotes the parameterized policy. This optimization can be efficiently performed using simulation-based stochastic approximation methods, as detailed in the following section.
\section{Computing an Optimal Stopping Policy}
We now exploit the structural properties established in the preceding section to compute an optimal stopping policy. By Props.~\ref{prop:threshold}--\ref{prop:closed}, this computation reduces to optimizing the stopping threshold $\alpha$, which we use to parameterize a threshold-based stopping policy $\mu_{\alpha}$; cf.~\eqref{eq:threshold_policy}. We implement this optimization using three methods: simultaneous perturbation stochastic approximation (SPSA) \cite{spsa}, the cross-entropy method \cite{rubinstein1999cross}, and differential evolution \cite{differential_evolution}. For each method, the value $E\left\{g(x_{\tau_{\mu_{\alpha}}}) -c\tau_{\mu_{\alpha}} \right\}$ is estimated via simulation of the Gaussian process model identified in \S\ref{sec:sys_id}. We run all simulations and optimizations on an M4 Pro Chip with macOS Sequoia 15.6.1 and Python 3.11.

Figure~\ref{fig:numerical} shows the convergence curves for varying values of the payoff weight $\beta$, as defined in \S\ref{sec:sys_id}. All methods converge to similar expected values, which suggests that they identify optimal or near-optimal thresholds. In all cases, convergence is achieved within seconds, which demonstrates the computational efficiency afforded by the threshold structure.

Figure~\ref{fig:threshold_bars} shows the optimized thresholds for each model and payoff weight $\beta$. We observe that the thresholds increase with $\beta$, which is expected. As the relative importance of payoff grows, the policy becomes more selective and demands higher scores before stopping. For small $\beta$ (e.g., $\beta=0.01$), the thresholds are low, indicating that the cost of refinement dominates and early stopping is preferred. Across the three foundation models, \textsc{haiku}~4.5 admits lower thresholds than the other models for small $\beta$, which is consistent with its higher continuation cost; cf.~Table~\ref{tab:cost}. For large $\beta$, however, \textsc{haiku}~4.5 exhibits higher thresholds than the other models.

\begin{figure}[H]
  \centering
  \scalebox{0.7}{    
\includegraphics{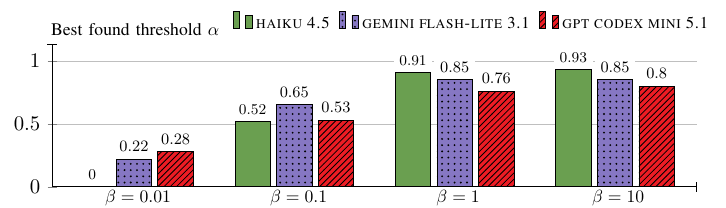}
  }
\caption{The optimized threshold $\alpha$ across the three foundation models and four values of the payoff weight $\beta$ (defined in \S\ref{sec:sys_id}).}\label{fig:threshold_bars}
\end{figure}

\begin{figure*}
  \centering
  \scalebox{0.705}{
    \includegraphics{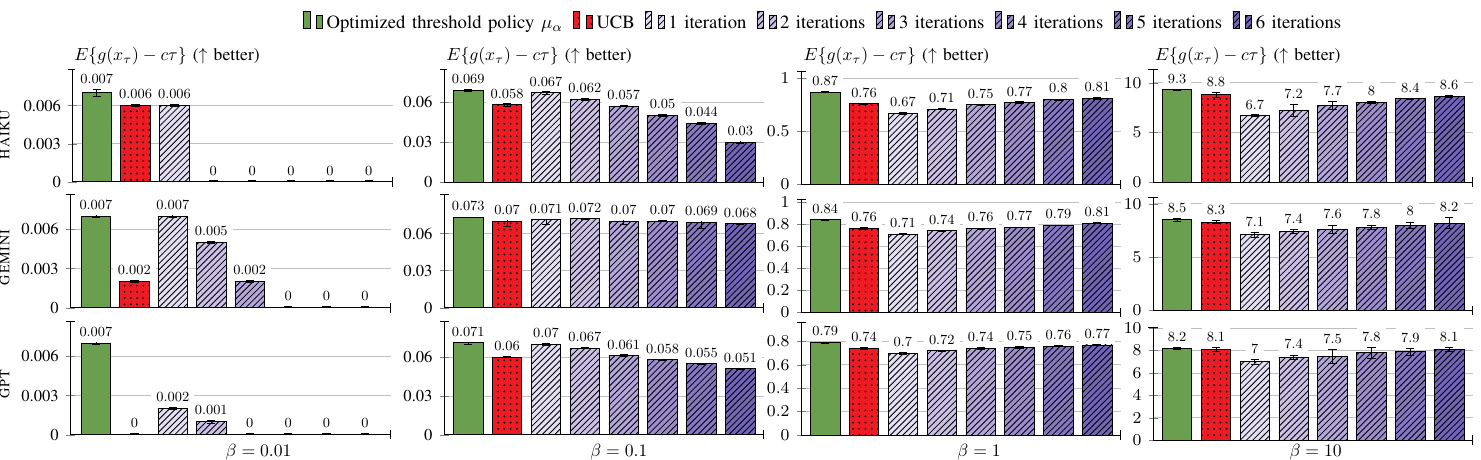}
  }
\caption{Evaluation results on \textsc{effibench} \cite{effibench}. Each group of bars shows the expected value $E\{g(x_{\tau}) - c\tau\}$ for the optimized threshold policy $\mu_{\alpha}$, the UCB baseline, and fixed-iteration stopping policies with $1$ to $6$ iterations. Rows correspond to the three foundation models. Columns indicate the payoff weight $\beta$ (defined in \S\ref{sec:sys_id}). Bar heights indicate the mean over evaluations with three different random seeds; error bars indicate one standard deviation.}\label{fig:effibench_eval}
\end{figure*}
\section{Experimental Evaluation on Effibench}
In this section, we evaluate the optimized stopping policy described in the preceding section on \textsc{effibench} \cite{effibench} and compare it against stopping policies proposed in prior work.

\vspace{2mm}
\noindent\textit{\textbf{Baseline stopping policies.}} We consider two categories of baselines. The first category consists of \textit{fixed-iteration} policies, which stop after a predetermined number of iterations. This is the most common approach in the self-refinement literature; see e.g., \cite{NEURIPS2024_99c66755,madaan2023selfrefine,zelikman2024selftaught,10610065,pich_llms}. The second category is the \textit{upper confidence bound} (UCB) policy proposed by Sun et al.~\cite{sun2026stop}, which adaptively decides when to stop based on confidence bounds on the values of a finite set of stopping thresholds. We instantiate this policy by defining the set of thresholds as $\alpha \in \{0, 0.1, 0.2, \hdots, 1\}$ and selecting the threshold with the highest upper confidence bound. 

\vspace{2mm}
\noindent\textit{\textbf{Experiment setup.}} We evaluate all stopping policies on $50$ problems from \textsc{effibench} \cite{effibench}. (These problems are disjoint from the problems used for system identification in \S\ref{sec:sys_id}.) Each stopping policy is applied to the three foundation models with a maximum of $N=10$ refinement iterations. We run each method three times with different random seeds and report the average performance.

\vspace{2mm}\noindent\textit{\textbf{Evaluation results.}} Figure~\ref{fig:effibench_eval} shows the expected value $E\{g(x_{\tau_{\mu}}) - c\tau_{\mu}\}$ for each stopping policy $\mu$ across the three foundation models and four values of the payoff weight $\beta$. We observe that the optimized threshold policy $\mu_{\alpha}$ achieves the highest expected value in all configurations. 

For the fixed-iteration policies, we find that the best policy varies with $\beta$. In particular, we observe that fewer iterations perform better when $\beta$ is small, while more iterations are preferred when $\beta$ is large. Finally, our experimental results show that the UCB policy automatically adapts to $\beta$ but consistently underperforms the optimized threshold policy.

\section{Discussion of the Experimental Results}
The main takeaways from the theoretical and experimental results can be summarized as follows.
\begin{enumerate}
\item \textit{Self-refinement exhibits diminishing returns.} The empirical measurements in Fig.~\ref{fig:iterations} show that all three foundation models improve output quality through self-refinement, but with progressively smaller gains per refinement iteration. Meanwhile, token consumption and monetary cost grow approximately linearly.

\item \textit{Optimal threshold-based stopping policies exist.} Propositions~\ref{prop:threshold}--\ref{prop:closed} establish that, under the conditions in Observations~\ref{obs:monotonicity}--\ref{obs:diminishing}, an optimal stopping policy is characterized by a single, stage-independent threshold $\alpha^{\star}$. This reduces the policy search from a sequence of functions $\mu_0, \hdots, \mu_{N-1}$ to a scalar optimization problem, which can be efficiently solved; see Fig.~\ref{fig:numerical}.

\item \textit{Optimal stopping improves over the state-of-the-art.} The most common approach in the self-refinement literature is to stop after a fixed number of iterations. As shown in Fig.~\ref{fig:effibench_eval}, the optimized threshold policy consistently outperforms this approach.
\end{enumerate}
\section{Conclusion}
Foundation models can improve their outputs through a self-refinement process where they iteratively critique and refine their own outputs. We show that this process can be formulated as an optimal stopping problem in which the decision to continue refining or to stop is made sequentially based on the expected improvement relative to the refinement cost. We establish conditions for optimal threshold-based stopping policies and validate them on a coding benchmark across three frontier models. Empirical results demonstrate that our stopping policy significantly improves cost-efficiency compared to the state-of-the-art. These results show the value of leveraging decision theory for efficient control of foundation models. As these models are increasingly deployed in large-scale automation pipelines with thousands of daily invocations, we believe that our optimal stopping approach can translate into substantial monetary savings.

\vspace{2mm}
\noindent\textit{\textbf{Future work.}} A natural direction for future work is to evaluate our approach on benchmark types other than coding benchmarks (e.g., mathematical reasoning benchmarks) and to instantiate it with foundation models other than large language models (e.g., time-series models or multi-modal models). From a theoretical perspective, our structural results (Props.~\ref{prop:threshold}--\ref{prop:closed}) rely on Observations~\ref{obs:monotonicity}--\ref{obs:diminishing} holding exactly. Since the system function is estimated via a Gaussian process, a promising direction for further analysis is to relax these assumptions by exploiting the posterior uncertainty of the GP to derive probabilistic guarantees on the threshold structure.

\bibliographystyle{IEEEtran}
\bibliography{references}
\end{document}